
\documentclass[runningheads]{llncs}
\usepackage{amsmath,amsfonts}
\usepackage{amssymb}
\usepackage{mathtools}
\usepackage{booktabs}
\usepackage{hyperref}
\usepackage[ruled,vlined]{algorithm2e}
\usepackage{graphicx}
\usepackage[colorinlistoftodos]{todonotes}
\usepackage{nicefrac}
\usepackage{diagbox}
\usepackage[misc]{ifsym}


\usepackage{caption}
\usepackage[list=true]{subcaption}
\usepackage{tabularx}
\newcolumntype{Y}{>{\centering\arraybackslash}X}
\usepackage{pgfplots}
\pgfplotsset{width=10cm,compat=1.9}
\usepackage{siunitx}

%

\begin{document}
\title{Estimating Descriptors for Large Graphs\thanks{The first two authors have been supported by the grant received to establish CIPL and the third author has been supported the grant received to establish SEIL, both associated with the National Center in Big Data and Cloud Computing, funded by the Planning Commission of Pakistan.}}
%
%
\author{Zohair Raza Hassan\inst{1} \and
Mudassir Shabbir\inst{1} \and
Imdadullah Khan\inst{2}(\Letter)
\and Waseem~Abbas\inst{3}}

\authorrunning{Z.R. Hassan, M. Shabbir, I. Khan, and W. Abbas}
\institute{Information Technology University of the Punjab, Pakistan \\
\email{zohair.raza@itu.edu.pk, mudassir.shabbir@itu.edu.pk} \\
\and
Lahore University of Management Sciences, Pakistan
\email{imdad.khan@lums.edu.pk} \\
\and
Vanderbilt University, USA
\email{waseem.abbas@vanderbilt.edu}}

\maketitle              
\begin{abstract}
Embedding networks into a fixed dimensional feature space, while preserving its essential structural properties is a fundamental task in graph analytics. These feature vectors (graph descriptors) are used to measure the pairwise similarity between graphs. This enables applying data mining algorithms (e.g classification, clustering, or anomaly detection) on graph-structured data which have numerous applications in multiple domains. State-of-the-art algorithms for computing descriptors require the entire graph to be in memory, entailing a huge memory footprint, and thus do not scale well to increasing sizes of real-world networks. In this work, we propose streaming algorithms to efficiently approximate descriptors by estimating counts of sub-graphs of order $k\leq 4$, and thereby devise extensions of two existing graph comparison paradigms: the Graphlet Kernel and NetSimile. Our algorithms require a single scan over the edge stream, have space complexity that is a fraction of the input size, and approximate embeddings via a simple sampling scheme. Our design exploits the trade-off between available memory and estimation accuracy to provide a method that works well for limited memory requirements. We perform extensive experiments on real-world networks and demonstrate that our algorithms scale well to massive graphs.

\keywords{Graph Descriptor \and Edge Stream \and Graph Classification}
\end{abstract}

\section{Introduction}
\label{sec:intro}

Evaluating similarity or distance between a pair of graphs is a building block of many fundamental data analysis tasks on graphs such as classification and clustering. These tasks have numerous applications in social network analysis, bioinformatics, computational chemistry, and graph theory in general. Unfortunately, large orders (number of vertices) and massive sizes (number of edges) prove to be challenging when applying general-purpose data mining techniques on graphs. Moreover, in many real-world scenarios, graphs in a dataset have varying orders and sizes, hindering the application of data mining algorithms devised for vector spaces. Thus, devising a framework to compare graphs with different orders and sizes would allow for rich analysis and knowledge discovery in many practical domains.

However, graph comparison is a difficult task; the best-known solution for determining whether two graphs are structurally the same takes quasi-polynomial time~\cite{babai2016graph}, and determining the minimum number of steps to convert one graph to another is \textsc{NP-Hard}~\cite{sanfeliu1983distance}.  In a more practical approach, graphs are first mapped into fixed dimensional feature vectors, where vector space-based algorithms are then employed. In a supervised setting, these feature vectors are learned through neural networks~\cite{morris2019weisfeiler,wu2019comprehensive,xu2018powerful}. In unsupervised settings, the feature vectors are descriptive statistics of the graph such as average degree, the eigenspectrum, or spectra of sub-graphs of order at most $k$ contained in the graph~\cite{sge,kondor2016multiscale,shervashidze2011weisfeiler,shervashidze2009efficient,tsitsulin2018netlsd,verma2017hunt}. 

The runtimes and memory costs of these methods depend directly on the magnitude (order and size) of the graphs and the dimensionality (dependent on the number of statistics) of the feature-space. While computing a larger number of statistics would result in richer representations, these algorithms do not scale well to the increasing magnitudes of a real-world graphs~\cite{faloutsos2013deltacon}.

A promising approach is to process graphs as streams - one edge at a time, without storing the whole graph in memory. In this setting, the graph descriptors are approximated from a representative sample achieving practical time and space complexity~\cite{Chen:2017:UFE:3110025.3110042,Sanei-Mehri:2019:FBE:3357384.3357983,shin2017wrs,shin2018tri,stefani2017triest}.  


In this work we propose \textsc{gabe} (Graphlet Amounts via Budgeted Estimates), and \textsc{maeve} (Moments of Attributes Estimated on Vertices Efficiently), stream-based extensions of the Graphlet Kernel~\cite{shervashidze2009efficient}, and NetSimile~\cite{berlingerio2013network}, respectively. Our contributions can be summarised as follows: \begin{itemize}
    \item We propose two simple and intuitive descriptors for graph comparisons that run in the streaming setting.
    \item We provide analytical bounds on the time and space complexity of our feature vectors generation; for a fixed budget, the runtime and space cost of our algorithms are linear.
    \item We perform extensive empirical analysis on benchmark graph classification datasets of varying magnitudes. We demonstrate that \textsc{gabe} and \textsc{maeve} are comparable to the state-of-the-art in terms of classification accuracy, and scale to networks with millions of nodes and edges.
\end{itemize}

The rest of the paper is organized as follows. We discuss the related work in Section~\ref{sec:rw}. Section~\ref{sec:prelim} discusses all preliminaries required to read the text. We present \textsc{gabe} and \textsc{maeve} in Section~\ref{sec:sol}.
We report our experimental findings in Section~\ref{sec:experiments} and finally conclude the paper in Section~\ref{sec:conclusion}.

\section{Related Work}
\label{sec:rw}

Methods for comparing a pair of graphs can broadly be categorized into \textit{direct approaches}, \textit{kernel methods}, \textit{descriptors}, and \textit{neural models}. 
Direct approaches for evaluating the similarity/distance between a pair of graphs preserve the entire structure of both graphs. The most prominent method under this approach is the {\em Graph Edit Distance} (\textsc{ged}), which counts the number of edit operations (insertion/deletion of vertices/edges) required to convert a given graph to another~\cite{sanfeliu1983distance}. Although intuitive, \textsc{ged} is stymied by its computational intractability. 
Computing distance based on the vertex permutation that minimizes the ``error'' between the adjacency representations of two graphs is a difficult task~\cite{babai2016graph}, and proposed relaxations of these distances are not robust to permutation~\cite{bento2018family}.
An efficient algorithm for large network comparison \textsc{DeltaCon}, is proposed in~\cite{faloutsos2013deltacon} but it is only feasible when there is a valid one-to-one correspondence between vertices of the two graphs.

In the kernel-based approach, graphs are mapped to a fixed dimensional vector space based on various substructures in the graphs. A kernel function is then defined, which serves as a pairwise similarity measure that takes as input a pair of graphs and outputs a non-negative real number. Typically, the kernel value is the inner-product between two feature vectors corresponding to the two graphs. This so-called kernel trick has been used successfully to evaluate pairwise of  other structures such as images and sequences~\cite{Bo_ImageKernel,Kuksa_SequenceKernel,Farhan_SequenceKernel}. Several graph kernels based on sub-structural patterns have been proposed, such as the Shortest-Path~\cite{borgwardt2005shortest} and Graphlet~\cite{shervashidze2009efficient} kernels.  More recently, a hierarchical kernel based on propagating spectral information within the graph~\cite{kondor2016multiscale} was introduced. The WL-Kernel~\cite{shervashidze2011weisfeiler} that is based on the Weisfeller-Lehman isomorphism test has been shown to provide excellent results for classification and is used as a benchmark in the graph representation learning literature. Kernels require expensive computation and typically necessitate storing the adjacency matrices, making them infeasible for massive graphs.

Graph Neural Networks (\textsc{gnn}s) learn graph level embeddings by aggregating node representations learned via convolving neighborhood information throughout the neural network's layers. This idea has been the basis of many popular neural networks and is as powerful as WL-Kernels for classification~\cite{morris2019weisfeiler,xu2018powerful}. We refer interested readers to a comprehensive survey of these models~\cite{wu2019comprehensive}. 
Unfortunately, these models also require expensive computation and storing large matrices, hindering scalability to real-world graphs.

Graph descriptors, like the above two paradigms, attempt to map graphs to a vector space such that similar graphs are mapped to closely in the Euclidean space. Generally, the dimensionality of these vectors is small, allowing efficient algorithms for graph embeddings. NetSimile~\cite{berlingerio2013network} describes graphs by computing moments of vertex features, while SGE~\cite{sge} uses random walks and hashing to capture the presence of different sub-structures in a graph.
State of the art descriptors are based on spectral information;~\cite{verma2017hunt} proposed a family of graph spectral distances and embedding the information as histograms on the multiset of distances in a graph, and NetLSD~\cite{tsitsulin2018netlsd} computes the heat (or wave) trace over the eigenvalues of a graph's normalized Laplacian to construct embeddings.

The fundamental limitation of all the above approaches is the requirement that the entire graph is available in memory. This limits the applicability of the methods to a graph of small magnitude. To the best of our knowledge, this work is the first graph comparison method that does not assume this.

Streaming algorithms assume an online setting; the input is streamed one element at a time, and the amount of space we are allowed is limited. This allows one to design scalable approximation algorithms to solve the underlying problems. There has been extensive work on estimating triangles (cycles of length three) in graphs~\cite{shin2017wrs,stefani2017triest}, butterflies (cycles of length four) in bipartite graphs~\cite{Sanei-Mehri:2019:FBE:3357384.3357983}, and anomaly detection~\cite{eswaran2018sedanspot} when the graph is input as a stream of edges. A framework for estimating the number of connected induced sub-graphs on three and four vertices is presented in~\cite{Chen:2017:UFE:3110025.3110042}. 

\section{Preliminaries and Problem Definition}
\label{sec:prelim}

\subsection{Notation and Terminology}

Let $G = \left(V_G, E_G\right)$ be an undirected, unweighted, simple graph, where $V_G$ is the set of vertices and $E_G$ is the set of edges. 

For  $v \in V_G$, let $N_G\left(v\right) = \{ u : \left(u, v\right) \in E_G \}$ be the set of neighbors of $v$, and $d^v_G := |N_G\left(v\right)|$ the degree of $v$. A graph is connected if and only if there exists a path between all pairs in $V_G$.

A sub-graph of $G$ is a graph, $G' = \left(V_{G'}, E_{G'}\right)$, such that $V_{G'} \subseteq V_G$ and $E_{G'}$ is a subset of edges in $E_G$ that are incident only on the vertices present in $V_{G'}$, i.e. $E_{G'} \subseteq \{ \left(u, v\right) : \left(u, v\right) \in E_G \land u, v \in V_{G'}\}$. If equality holds ($E_{G'}$ contains all edges from the original graph), then $G'$ is called an induced sub-graph of $G$.

Two graphs, $G_1$ and $G_2$, are isomorphic if and only if there exists a permutation $\pi: V_{G_2} \rightarrow V_{G_1}$ such that $E_{G_1} = \{ (\pi(u), \pi(v)) : (u,v) \in E_{G_2} \}$. For a graph $F = \left(V_{F}, E_{F}\right)$, let $H^{F}_G$ (resp. $\widehat{H}^{F}_G$) be the set of sub-graphs (resp. induced sub-graphs) of $G$ that are isomorphic to $F$.

We assume vertices in $V_G$ are denoted by integers in the range $[0,|V_G|-1]$. Let $S = e_1, e_2, \ldots, e_{|E_G|}$ be a sequence of edges in an arbitrary but fixed order, i.e. $e_t = \left(u_t, v_t\right)$ is the $t^{th}$ edge. Let $b$ be the maximum number of edges (budget) one can store in our sample, referred to as $\widetilde{E_G}$.

\subsection{Problem Definition}

We now formally define the graph descriptor problem:

\begin{problem}[Constructing Graph Descriptors]
Let $\mathcal{G}$ be the set of all possible undirected, unweighted, simple graphs. We wish to find a function, $\varphi: \mathcal{G} \rightarrow \mathbb{R}^d$, that can map any given graph to a $d$-dimensional vector.
\end{problem}

Existing work~\cite{berlingerio2013network,tsitsulin2018netlsd} on graph descriptors asserts that the underlying algorithms should be able to run on any graph, regardless of order or size, and should output the same representation for different vertex permutations. Moreover, the descriptors should capture features that can be compared across graphs of varying orders; directly comparing sub-graph counts is illogical as bigger graphs will naturally have more sub-graphs. The descriptors we propose are based on graph comparison methods that meet these requirements due to their graph-theoretic nature and feature scaling based on the graph's magnitude. We consider an online setting and model the input graph as a stream of edges. We impose the following constraints on our algorithms:

\begin{description}
    \item[C1:] \textbf{Single Pass:} The algorithm is only allowed to receive the stream once.
    \item[C2:] \textbf{Limited Space:} The algorithm can store a maximum of $b$ edges at once.
    \item[C3:] \textbf{Linear Complexity:} Space and time complexity of the algorithms should be linear (for fixed $b$) with respect to the order and size of the graph.
\end{description}

\subsection{Estimating Connected Sub-graph Counts on Streams}
\label{sec:est}
\begin{problem}[Connected Sub-graph Estimation on Streams]
\label{prob:subest}
Let $S$ be a stream of edges, $ e_1, e_2, \ldots, e_{|E_G|}$ for some graph $G = (V_G, E_G)$. Let $F = (V_F, E_F)$ be a small connected graph such that $|V_F| \ll |V_G|$. Produce an estimate, $N^F_G$, of $|H^F_G|$ while storing a maximum of $b$ edges at any given instant.
\end{problem}
Based on previous works on sub-graph estimation~\cite{shin2018tri,shin2017wrs,Chen:2017:UFE:3110025.3110042,stefani2017triest} the underlying recipe for algorithms that solve Problem~\ref{prob:subest} consists of the following steps:
\begin{itemize}
    \item For each edge $e_t \in S$, counting the instances of $F$ incident on $e_t$. For example, if $F$ is a triangle, then it amounts to counting the number of triangles an edge $e_t$ is part of.
    \item A sampling scheme through which we can compute the probability of detecting $F$ in our sample, denoted by $p^F_t$, at the arrival of the $t^{th}$ edge.
\end{itemize}

At the arrival of $e_t$, we increment our estimate of $|H^F_G|$ by $1/p^F_t$ for all instances of $F$ in our sample $\widetilde{E_G}$ that $e_t$ belongs to. The pseudocode is provided in Algorithm~\ref{alg:pseudo}. This simple methodology allows one to compute estimates whose expected values are equal to $|H^F_G|$:

\begin{theorem}
\label{thm:unbiased}
Algorithm~\ref{alg:pseudo} provides unbiased estimates: $\mathbb{E}[N^F_G] = |H^F_G|$.
\end{theorem}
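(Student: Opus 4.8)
The plan is to prove unbiasedness by decomposing the estimate into per-copy contributions and then invoking linearity of expectation. The first step is to observe that each copy of $F$ in $G$ can contribute to $N^F_G$ at exactly one moment: the time step at which the last of its edges arrives in the stream. Concretely, fix a copy $f \in H^F_G$, viewed as a set of $|E_F|$ edges, and let $t_f$ be the arrival time (in the fixed order $S$) of its last-appearing edge. Before $t_f$ the copy is incomplete and cannot appear as an instance of $F$ inside $\widetilde{E_G}$, so it is never counted; at time $t_f$, processing $e_{t_f}$ triggers a contribution of $1/p^F_{t_f}$ precisely when the remaining $|E_F|-1$ edges of $f$ are retained in the sample. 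Letting $X_f$ be the indicator of that event, I can write $N^F_G = \sum_{f \in H^F_G} X_f / p^F_{t_f}$, with no double counting since $t_f$ is uniquely determined by each copy.

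The second step is to compute the expectation of a single term. By the defining property of the sampling scheme, $p^F_t$ is exactly the probability that the $|E_F|-1$ earlier edges of a copy completed at time $t$ are all present in the sample when $e_t$ is processed; this quantity depends on $t$, the budget $b$, and the pattern $F$, but not on the particular copy. Hence $\mathbb{E}[X_f] = \Pr[X_f = 1] = p^F_{t_f}$, so each term satisfies $\mathbb{E}[X_f/p^F_{t_f}] = p^F_{t_f}/p^F_{t_f} = 1$. Summing over all $|H^F_G|$ copies via linearity of expectation then yields $\mathbb{E}[N^F_G] = \sum_{f \in H^F_G} 1 = |H^F_G|$, which is the claim.

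The step I expect to be the main obstacle is rigorously establishing $\mathbb{E}[X_f] = p^F_{t_f}$, since this hinges on the precise sampling rule and on whether $p^F_t$ is a fixed number or a random quantity determined by the sample's current state. If the scheme makes $p^F_t$ a deterministic function of $t$ and $b$, the division and the equality are immediate. If instead $p^F_t$ depends on random counters available at time $t_f$, I would condition on the sample configuration just prior to processing $e_{t_f}$, note that both $X_f$ and $p^F_{t_f}$ are measurable with respect to that configuration, and apply the tower property to recover $\mathbb{E}[X_f/p^F_{t_f}]=1$ term by term. Pinning down this identity is the crux; once it is in place, the linearity argument closes the proof immediately.
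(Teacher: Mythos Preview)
Your argument is correct and matches the paper's own proof essentially line for line: both decompose $N^F_G$ into a sum over copies $h\in H^F_G$, observe that each copy is counted exactly once (at the arrival of its last edge) with weight $1/p^F_t$, note that the detection probability is precisely $p^F_t$ so each term has expectation~$1$, and conclude by linearity of expectation. Your extra discussion about conditioning when $p^F_t$ is random is more careful than the paper needs, since reservoir sampling makes $p^F_t$ a deterministic function of $t$ and $b$, but it does no harm.
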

\begin{proof}
For a sub-graph $h \in H^G_F$, let $X_h$ be a random variable such that $X_h = 1/p^F_t$ if $h$ is detected at the arrival of its last edge in the stream $e_t$, and 0 otherwise. Clearly, $N^F_G = \sum_{h \in H^F_G} X_h$, and $\mathbb{E}[X_h] = (1/p^F_t)\times p^F_t = 1$. Therefore,
\begin{equation*}
    \mathbb{E}\left[N^F_G\right] = \mathbb{E}\left[\sum_{h \in H^F_G} X_h\right] = \sum_{h \in H^F_G} \mathbb{E}\left[X_h\right] = \sum_{h \in H^F_G} 1 = \left|H^F_G\right|.
\end{equation*}
\end{proof}
At the arrival of $e_t$, counting only the sub-graphs that $e_t$ belongs to ensures that we do count the same sub-graph twice. In this work, we employ reservoir sampling~\cite{Vitter:1985:RSR:3147.3165}, which has been shown to be effective for sub-graph estimation~\cite{Chen:2017:UFE:3110025.3110042,shin2018tri,stefani2017triest}. Using reservoir sampling, the probability of detecting an $F$ that $e_t$ belongs to at the arrival of $e_t$ is equivalent to the probability that $|E_F| -1$ particular edges are present in the sample after $t-1$ time-steps: $p^F_t = \min\left(1, \prod^{|E_F| -2}_{i = 0} \frac{b - i}{t - 1 -i}\right)$.

\begin{algorithm}[t]
\SetAlgoLined
\DontPrintSemicolon
\SetKwInOut{Input}{Input}\SetKwInOut{Output}{Output}
\Input{Stream of edges $S = e_1, e_2, \ldots, e_{|E_G|}$, budget $b$, and a graph $F$}
\Output{$N^F_G$ (estimate of $|H^F_G|$)}
$\widetilde{E_G} \leftarrow \emptyset$, $N^F_G \leftarrow 0$ \tcc*[r]{Initialize sample of edges, and estimate}
\For{$e_t \in S$}{
    Find all instances of $F$ that $e_t$ belongs to in  $\widetilde{E_G} \cup \{e_t\}$\;
    Increment $N^F_G$ by $1/p^F_t$ for each $F$ detected\;
    Sample $e_t$ in $\widetilde{E_G}$, based on $b$
}
\caption{Sub-graph Estimation on Streams}
\label{alg:pseudo}
\end{algorithm}

We now derive an upper bound for the variance. Note that while the bound is loose, it is sufficient to show that we obtain better results with greater $b$, and applies to any connected graph, $F$.

\begin{theorem}
\label{thm:var}
When using reservoir sampling, the variance of $N^F_G$ in Algorithm~\ref{alg:pseudo} is bounded as follows: $\mathrm{Var}[N^F_G] \leq |H^F_G|^2 \prod^{|E_F| -2}_{i = 0} \frac{|E_G| - i}{b -i}$.
\end{theorem}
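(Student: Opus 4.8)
The plan is to reuse the decomposition from the proof of Theorem~\ref{thm:unbiased}, writing $N^F_G = \sum_{h \in H^F_G} X_h$ where $X_h = 1/p^F_{t_h}$ when $h$ is detected at the arrival of its final edge $e_{t_h}$ and $X_h = 0$ otherwise. Since we already know $\mathbb{E}[N^F_G] = |H^F_G|$, I would begin from $\mathrm{Var}[N^F_G] = \mathbb{E}[(N^F_G)^2] - |H^F_G|^2$ and expand the square as $\mathbb{E}[(N^F_G)^2] = \sum_{h} \mathbb{E}[X_h^2] + \sum_{h \neq h'} \mathbb{E}[X_h X_{h'}]$, reducing the task to bounding the diagonal and off-diagonal contributions separately. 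Throughout I abbreviate $P := \prod_{i=0}^{|E_F|-2} \frac{|E_G|-i}{b-i}$, the quantity appearing in the claimed bound.

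The diagonal terms are straightforward. Because $X_h$ equals $1/p^F_{t_h}$ with probability $p^F_{t_h}$ and $0$ otherwise, $\mathbb{E}[X_h^2] = 1/p^F_{t_h}$. The key observation is that $p^F_t = \min(1, \prod_{i=0}^{|E_F|-2}\frac{b-i}{t-1-i})$ is non-increasing in $t$, so over any admissible arrival time $t_h \leq |E_G|$ it is smallest at $t = |E_G|$; a factor-by-factor comparison then gives $1/p^F_{t_h} \leq \prod_{i=0}^{|E_F|-2}\frac{t_h-1-i}{b-i} \leq P$. Hence each of the $|H^F_G|$ diagonal terms is at most $P$.

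The off-diagonal terms are the crux, and the main obstacle is the dependency structure that reservoir sampling imposes on detection events: the fixed budget $b$ induces correlations between the presence of edges in the sample, and pairs $h, h'$ sharing edges are coupled further. Rather than analyze these correlations explicitly, I would invoke the elementary inequality $\Pr[h \text{ and } h' \text{ both detected}] \leq \min(p^F_{t_h}, p^F_{t_{h'}})$, valid irrespective of the sign or magnitude of the correlation. Writing $\mathbb{E}[X_h X_{h'}] = \Pr[\text{both detected}]/(p^F_{t_h}\, p^F_{t_{h'}})$ and substituting yields $\mathbb{E}[X_h X_{h'}] \leq 1/\max(p^F_{t_h}, p^F_{t_{h'}}) \leq P$, again by the monotonicity bound above. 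This crude step is exactly what makes the resulting bound loose while keeping it applicable to every connected $F$.

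Summing over all $|H^F_G|^2$ ordered pairs, each contribution is at most $P$, so $\mathbb{E}[(N^F_G)^2] \leq |H^F_G|^2 P$ and therefore $\mathrm{Var}[N^F_G] \leq |H^F_G|^2 P - |H^F_G|^2 = |H^F_G|^2 (P-1) \leq |H^F_G|^2 P$, which is the claimed inequality. The only points requiring care are checking that the factor-by-factor comparisons respect positivity of the denominators (i.e. $b \geq |E_F|-1$) and that the $\min(1,\cdot)$ truncation in $p^F_t$ does not break the bound, both of which hold in the regime $b \leq |E_G|$ of interest, where $P \geq 1$.
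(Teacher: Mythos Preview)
Your proof is correct and reaches the same bound, but the mechanism differs from the paper's. The paper expands $\mathrm{Var}[N^F_G] = \sum_{h,h'} \mathrm{Cov}[X_h,X_{h'}]$ and then applies the Cauchy--Schwarz inequality $\mathrm{Cov}[X_h,X_{h'}] \leq \sqrt{\mathrm{Var}[X_h]\mathrm{Var}[X_{h'}]}$, after bounding each $\mathrm{Var}[X_h] = 1/p^F_{t_h} - 1 \leq 1/p^F_{|E_G|}$. You instead work with second moments directly and control the cross terms via the elementary probability bound $\Pr[h,h'\text{ both detected}] \leq \min(p^F_{t_h},p^F_{t_{h'}})$, which yields $\mathbb{E}[X_h X_{h'}] \leq 1/\max(p^F_{t_h},p^F_{t_{h'}})$. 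Your route is slightly more elementary (no Cauchy--Schwarz needed) and even extracts the extra $-|H^F_G|^2$ saving before discarding it; the paper's route is a one-line application of a standard inequality once the individual variances are bounded. Both are deliberately crude and make no attempt to exploit the sign or structure of the correlations, which is consistent with the paper's remark that the bound is loose but uniform over all connected $F$.
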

\begin{proof}
The theorem is trivially true when $b \geq |E_G| - 1$. We now explore the case when $b < |E_G| - 1$. Let $X_h$ be a random variable as defined in the proof for Theorem~\ref{thm:unbiased}. Note that $p^F_t \geq p^F_{|E_G|}$, and $\mathrm{Var}[X_h] = \mathbb{E}[X_h^2] - \mathbb{E}[X_h]^2 = 1/p^F_t - 1 \leq 1/p^F_{|E_G|}$. We bound the total variance using the Cauchy-Schwarz inequality:
\begin{align*}
    \mathrm{Var}[N^F_G] &= \sum_{h \in H^F_G} \sum_{h' \in H^F_G} \mathrm{Cov}[X_h,X_{h'}] \leq \sum_{h \in H^F_G} \sum_{h' \in H^F_G} \sqrt{\mathrm{Var}[X_h] \mathrm{Var}[X_{h'}]} \\
    &\leq \sum_{h \in H^F_G} \sum_{h' \in H^F_G} \frac{1}{p^F_{|E_G|}} = |H^F_G|^2 \prod^{|E_F| -2}_{i = 0} \frac{|E_G| - 1 - i}{b -i}.
\end{align*}
\end{proof}
Note that this methodology is also applicable for estimating the number of sub-graphs that each vertex is incident in, and simple modifications to the proofs for Theorems~\ref{thm:unbiased} and \ref{thm:var} will prove the same results for estimations on vertex counts.
\section{\textsc{GABE} and \textsc{MAEVE}}\label{sec:sol}
In this section discuss our two proposed descriptors: Graphlet Amounts via Budgeted Estimates (\textsc{gabe}), which is based on the Graphlet Kernel, and Moments of Attributes Estimated on Vertices Efficiently (\textsc{maeve}), based on NetSimile. 

\begin{figure}[t]
    \centering
    \includegraphics[width=.9\textwidth]{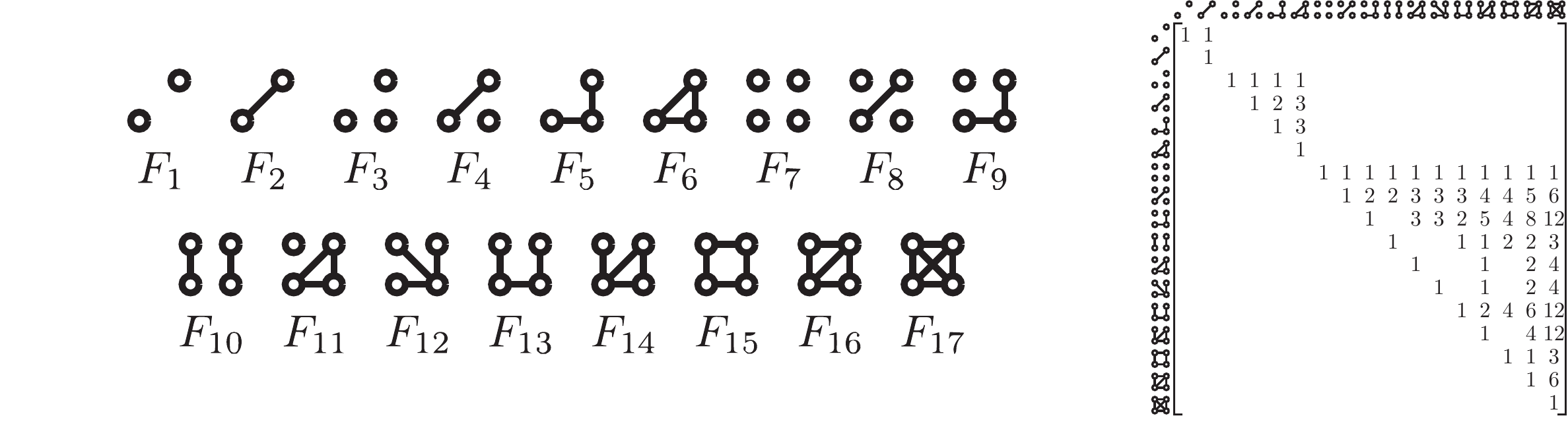}
    \caption{The graphs counted by \textsc{gabe}, and their corresponding overlap matrix $\mathcal{O}$ (best viewed when zoomed in). Zeros have been omitted for readability.}
    \label{fig:gabe1}
\end{figure}
\subsection{Graphlet Amounts via Budgeted Estimates}
\label{sec:gabe}
Let $\mathcal{F}_k$ be the set of graphs with order $k$. For two given graphs, $G_1$ and $G_2$, Shervashidze et al.~\cite{shervashidze2009efficient} propose counting all graphlets (induced sub-graphs) of order $k$ in both graphs, and computing similarity based on the inner product $\langle \phi_k(G_1), \phi_k(G_2) \rangle$, where, for a given $k$, and graphs $F_i \in \mathcal{F}_k$: 

\begin{equation*}
\phi_k(G) := \frac{1}{{\binom{|V_G|}{k}}} \begin{bmatrix}
\left|\widehat{H}^{F_1}_G\right| & \left|\widehat{H}^{F_2}_G\right| & \left|\widehat{H}^{F_3}_G\right| & \cdots & 
\left|\widehat{H}^{F_{|\mathcal{F}_k|-1}}_G\right| & 
\left|\widehat{H}^{F_{|\mathcal{F}_k|}}_G\right|
\end{bmatrix}^\intercal
\end{equation*}

Their algorithm runs in $O(|V_G|d^{k-1})$ ($d = \max_{v \in V_G} d^v_G$) for $k \in \{3,4,5\}$, and uses adjacency matrices. We use the methodology of ~\cite{Chen:2017:UFE:3110025.3110042}, to estimate the sub-graph counts as in Section~\ref{sec:est}, then compute induced sub-graph counts based on the overlap of graphs of the same order. We follow this procedure for estimating sub-graph counts of order $k \in \{2,3,4\}$, then concatenate the resultant $\phi_k(G)$'s into a vector. The 17 graphs we enumerate are shown in Figure~\ref{fig:gabe1}. Note that unlike~\cite{Chen:2017:UFE:3110025.3110042}, we also estimate the counts of disconnected induced sub-graphs.

\vskip.05in 
\noindent{\bf Induced Sub-graph Counts.} Let $\mathcal{F} = \{F_1, F_2, \ldots, F_{17}\}$ be the set of graphs we enumerate. Let $\mathcal{H}^{\mathcal{F}}_G$ (resp. $\widehat{\mathcal{H}}^{\mathcal{F}}_G$) be a vector such that $i^{th}$ entry corresponds to $|H^{F_i}_G|$ (resp. $|\widehat{H}^{F_i}_G|$).
Let $\mathcal{O}$ be a $|\mathcal{F}| \times |\mathcal{F}|$ matrix such that $O(i,j)$ is the number of sub-graphs of $F_j$, isomorphic to $F_i$, when $|V_{F_i}| = |V_{F_j}|$, and 0 otherwise. One can clearly see that $\mathcal{H}^{\mathcal{F}}_G = \mathcal{O}\widehat{\mathcal{H}}^{\mathcal{F}}_G$, as we account for the sub-graph counts that are disregarded when only considering induced sub-graphs. Since $\mathcal{O}$ is an upper triangular matrix, it is invertible. Thereby, given $\mathcal{H}^{\mathcal{F}}_G$, one can retrieve the induced sub-graph counts by computing $\mathcal{O}^{-1}\mathcal{H}^{\mathcal{F}}_G$. By linearity of expectation, Theorem~\ref{thm:unbiased} implies that the induced sub-graph counts are unbiased as well.

While processing the stream, we store the degree of each vertex, by incrementing the degree for $u_t, v_t$ when $e_t = (u_t,v_t)$ arrives. We use edge-centric algorithms (as described in Section~\ref{sec:est}) to compute estimates for $F_6, F_{13}, \ldots, F_{17}$, and use intuitive combinatorial formulas, listed in Table~\ref{tab:gabeform}, to compute the remaining 11 sub-graphs. We can compute $|E_G|$ and $|V_G|$ by keeping track of how many edges have been received, and the maximum vertex label received, respectively.
\begin{table}[t]
\caption{Graphs and their corresponding sub-graph count formulas.}
\label{tab:gabeform}
\begin{tabularx}{1\textwidth}{cY|cY|cY}
\hline
\textbf{Graph} & \textbf{Formula}                          & \textbf{Graph} & \textbf{Formula}                             & \textbf{Graph} & \textbf{Formula}                             \\ \hline
\parbox[c]{1em}{\centering\includegraphics[width=1em]{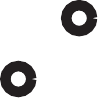}}    & $\binom{|V_G|}{2}$               & \parbox[c]{1em}{\centering\includegraphics[width=1em]{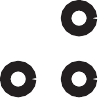}}    & $\binom{|V_G|}{3}$                  & \parbox[c]{1em}{\centering\includegraphics[width=1em]{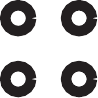}}    & $\binom{|V_G|}{4}$                  \\
\parbox[c]{1em}{\centering\includegraphics[width=1em]{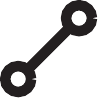}}    & $|E_G|$                          &\parbox[c]{1em}{\centering\includegraphics[width=1em]{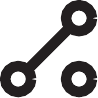}}    & $|E_G|(|V_G|-2)$                    & \parbox[c]{1em}{\centering\includegraphics[width=1em]{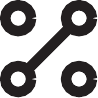}}    & $|E_G|\binom{|V_G|-2}{2}$           \\
\parbox[c]{1em}{\centering\includegraphics[width=1em]{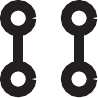}}   & $\binom{|E_G|}{2} - |H^{F_5}_G|$ & \parbox[c]{1em}{\centering\includegraphics[width=1em]{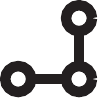}}    & $\sum_{v \in V_G} \binom{d^v_G}{2}$ & \parbox[c]{1em}{\centering\includegraphics[width=1em]{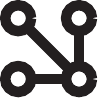}}   & $\sum_{v \in V_G} \binom{d^v_G}{3}$ \\
\parbox[c]{1em}{\centering\includegraphics[width=1em]{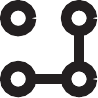}}    & $|H^{F_5}_G|(|V_G|-3)$           & \parbox[c]{1em}{\centering\includegraphics[width=1em]{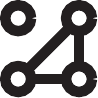}}   & $|H^{F_6}_G|(|V_G|-3)$              & -     & -                                   \\ \hline
\end{tabularx}
\caption{Features extracted for each vertex, $v \in V_G$ for \textsc{maeve}, their formulae, and a figure highlighting the relevant edges. The filled in vertex depicts $v$.}
\label{tab:maeve}
{\def\arraystretch{1.6}
\begin{tabularx}{1\textwidth}{cYYYc}
Degree & Clustering Coefficient & Avg. Degree of $N_G(v)$ & Edges in $I_G(v)$ & Edges leaving $I_G(v)$ \\ \hline
$d^v_G$ & $|T_G(v)|/\binom{d^v_G}{2}$ & $1+|P_G(v)|/d^v_G$ & $d^v_G + |T_G(v)|$ & $|P_G(v)| - 2|T_G(v)|$ \\
\parbox[c]{0.15\textwidth}{\centering\includegraphics[width=0.13\textwidth]{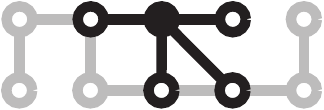}} & \parbox[c]{0.15\textwidth}{\centering\includegraphics[width=0.13\textwidth]{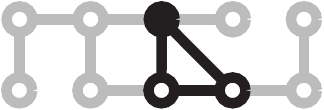}} & \parbox[c]{0.15\textwidth}{\centering\includegraphics[width=0.13\textwidth]{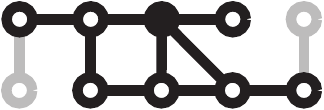}} & \parbox[c]{0.15\textwidth}{\centering\includegraphics[width=0.13\textwidth]{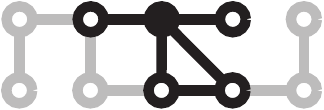}} & \parbox[c]{0.15\textwidth}{\centering\includegraphics[width=0.13\textwidth]{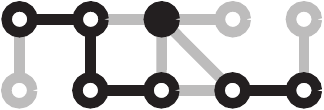}}
\end{tabularx}
}
\end{table}

\vskip.05in 
\noindent
{\bf Time and Space Complexity.} An array of size $|V_G|$ is used to store degrees, which can be accessed in $O(1)$ time, and hence the counts for $F_5$ and $F_{12}$ can be incremented each time an edge arrives in $O(1)$. Let $G'$ denote the graph represented by $\widetilde{E_G}$, stored as an adjacency list. Determining if two vertices are adjacent takes $O(\log b)$ time when using a tree data-structure within the stored adjacency list. At the arrival of $e_t = (u_t, v_t)$, we need to visit only the vertices two hops away from $u_t$ (resp. $v_t$), then perform at most three adjacency checks. Thereby, we perform $2\left(\sum_{w \in N_{G'}(u_t)} d^w_{G'} + \sum_{w \in N_{G'}(v_t)} d^w_{G'}\right) \times 3\log{b} = O(b\log{b})$ operations for one edge, and $O(b\log{b}|E_G|)$ in total. Storing an adjacency list with $b$ edges, and an array for degrees takes $O(b + |V_G|)$ space.

\subsection{Moments of Attributes Estimated on Vertices Efficiently}
\label{sec:maeve}

NetSimile~\cite{berlingerio2013network} propose extracting features for each vertex and aggregating them by taking moments over their distribution. Similarly, we propose extracting a subset of those features, listed in Table~\ref{tab:maeve}, and computing four moments for each feature: mean, standard deviation, skewness, and kurtosis.

\vskip.05in 
\noindent
{\bf Extracting Vertex Features.} For a graph $G$, and a vertex $v \in V_G$, we use $I_G(v)$ to denote the induced sub-graph of $G$ formed by $v$ and its neighbors. Let $T_G(v)$ be the set of triangles that $v$ belongs to, and $P_G(v)$ be the set of three-paths (paths on three vertices) where $v$ is an end-point. We compute the features in Table~\ref{tab:maeve} by using their formulas on estimates of $|T_G(v)|$, $|P_G(v)|$, and $d^v_G$ computed for each $v \in V$ as in Sections~\ref{sec:est} and~\ref{sec:gabe}.

\begin{theorem}
For a vertex $v \in V_G$, all vertex features used in \textsc{maeve} can be expressed in terms of $d^v_G$, $|T_G(v)|$, and $|P_G(v)|$.
\end{theorem}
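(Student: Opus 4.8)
The plan is to verify the five feature formulas listed in Table~\ref{tab:maeve} one at a time, reducing each to the three base quantities $d^v_G$, $|T_G(v)|$, and $|P_G(v)|$. The degree feature is immediate, since it equals $d^v_G$ by definition. For the remaining four I would first establish two elementary counting identities that do the heavy lifting, and then read off each feature from them.

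First, I would argue that the number of edges among the neighbours of $v$ equals $|T_G(v)|$: an edge $(u,w)$ with $u,w \in N_G(v)$ is in bijection with the triangle $\{v,u,w\}$ through $v$. Second, I would show $|P_G(v)| = \sum_{u \in N_G(v)} (d^u_G - 1)$, by observing that a three-path with $v$ as an endpoint is exactly a choice of a neighbour $u$ of $v$ together with a further neighbour $w \neq v$ of $u$. These two identities immediately settle three features. The clustering coefficient is the ratio of edges present among the neighbours to the number of possible such edges, namely $|T_G(v)|/\binom{d^v_G}{2}$. The edge count of $I_G(v)$ is the number of edges incident to $v$ plus the number of edges among its neighbours, giving $d^v_G + |T_G(v)|$. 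The average degree of $N_G(v)$ is $\frac{1}{d^v_G}\sum_{u\in N_G(v)} d^u_G$, and rewriting the sum via the second identity as $\sum_{u\in N_G(v)} d^u_G = d^v_G + |P_G(v)|$ yields $1 + |P_G(v)|/d^v_G$.

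The step I expect to require the most care is the count of edges leaving $I_G(v)$, i.e. edges with exactly one endpoint in $\{v\}\cup N_G(v)$. I would count these by summing, over each neighbour $u$ of $v$, the number of $u$'s edges that go outside $I_G(v)$, which equals $(d^u_G - 1)$ minus the number of $u$'s neighbours that also lie in $N_G(v)$. The subtle point is a double-counting argument: summing the common-neighbour terms over all $u \in N_G(v)$ counts each triangle through $v$ exactly twice, once for each of its two edges incident to $v$, so this total is $2|T_G(v)|$. Combining this with the second identity gives $\sum_{u\in N_G(v)}(d^u_G - 1) - 2|T_G(v)| = |P_G(v)| - 2|T_G(v)|$, matching the table. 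Assembling the five verified expressions then completes the proof, since every feature has been written purely in terms of $d^v_G$, $|T_G(v)|$, and $|P_G(v)|$.
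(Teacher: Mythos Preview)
Your proposal is correct and follows essentially the same route as the paper: establish the identities $\sum_{u\in N_G(v)} d^u_G = d^v_G + |P_G(v)|$ and $\#\{\text{edges among neighbours of }v\} = |T_G(v)|$, then read off each feature. The only cosmetic difference is in the ``edges leaving $I_G(v)$'' step, where the paper partitions $P_G(v)$ by whether the far endpoint lies in $N_G(v)$ while you sum over neighbours and subtract the double-counted triangle edges; these are two presentations of the same double-counting argument.
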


\begin{proof}
The first two are already expressed in terms of $d^v_G$ and $|T_G(v)|$. 

\noindent\textit{Average Degree of Neighbors:} For each $u \in N_G(v)$, there is exactly one edge connected to $v$, accounting for $d^v_G$ edges. The remaining edges are part of three-paths on which $v$ is an end-point. Therefore, $\sum_{u \in N_G(v)} d^u_G = d^v_G + |P_G(v)|$.

\noindent\textit{Edges in $I_G(v)$:} There are two types of edges in $E_{I_G(v)}$: (1) edges incident on $v$, of which there are $d^v_G$, and (2) edges not incident on $v$. The latter must belong to a pair of vertices which form a triangle with $v$. For each such edge, there is exactly one triangle. Therefore, $\left|E_{I_G(v)}\right| = d^v_G + |T_G(v)|$.

\noindent\textit{Edges leaving $I_G(v)$:} Consider a sub-graph $h \in P_G(v)$. Let $u$ be the other end-point of $h$, and $w$ be the center vertex. When $u \not\in N_G(v)$, it belongs to a three-path that is not in $N_G(v)$, and is thereby an edge leaving the induced sub-graph of $v$. Now, consider $u \in N_G(v)$. Clearly, the edge $(u, w)$ forms a triangle, and is incident in exactly two three-paths: $\{(v, u), (u, w)\}$ and $\{(v, w), (u, w)\}$. Therefore, if we account for the three-paths that lie within $N_G(v)$, we can formulate the number of edges leaving $I_G(v)$ as $|P_G(v)| - 2|T_G(v)|$.
\end{proof}
\vskip-.07in
\noindent {\bf Time and Space Complexity.} As in Section~\ref{sec:gabe}, we assume an adjacency list with an underlying tree structure and refer to the sampled graph as $G'$. At the arrival of an edge $e_t = (u_t, v_t)$, one can traverse the neighborhoods to obtain the triangle and three-path count in $(N_{G'}(u_t) + N_{G'}(v_t)) + N_{G'}(u_t) + N_{G'}(v_t) = O(b)$ time. We store three arrays of size $|V_G|$ to store degrees, triangle counts, and three-path counts. We can compute the moments in at most two passes over these arrays, giving us a total of $O(b|E_G | + |V_G|)$ time. Storing an adjacency list of size $b$ and arrays of size $|V_G|$ gives us $O(b + |V_G|)$ space.

\vskip.05in 
\noindent
{\bf Improving Estimation Quality with Multiple Workers.} Multiple worker machines can be used in parallel to independently estimate triangle counts before aggregating them~\cite{shin2018tri}. Using $W$ worker machines decreases the variances by a factor of $1/W$. Their methodology can be adopted \textit{mutatis mutandis} in our algorithms to improve the estimation quality.


\begin{table}[h]
\caption{Details of classification datasets. The number of graphs, classes, and minimum/maximum number of vertices/edges in a graph have been provided.}
\label{tab:benchmark}
\begin{tabularx}{1\textwidth} {lYYYY}
\hline
\textbf{Dataset}      & $|\mathcal{G}|$ & Classes & $\max|V_G|$ & $\max|E_G|$ \\ \hline
D\&D             & 1,178                          & 2       & 5,748       & 14,267\\
COLLAB           & 5,000                           & 3       & 492         & 40,120\\
REDDIT-BINARY    & 2,000                           & 2       & 3,782       & 4,071\\
REDDIT-MULTI-5K  & 4,999                           & 5       & 3,648       & 4,783\\
REDDIT-MULTI-12K & 11,929                          & 11      & 3,782       & 5,171\\ \hline
\end{tabularx}

\caption{Massive networks with their order, size, and what they represent.}
\label{tab:massive}
\begin{tabularx}{1\textwidth} {lYYY}
\hline
\textbf{Graph} & $|V_G|$    & $|E_G|$     & \textbf{Network Type}    \\\hline
Patent         & 3,774,768  & 16,518,937  & Citation \\
Flickr         & 2,302,925  & 22,838,276  & Friendship       \\
Full USA       & 23,947,347 & 28,854,312  & Road             \\
UK Domain 2002 & 18,483,186 & 261,787,258 & Hyperlink  \\\hline
\end{tabularx}
\caption{Classification accuracy on the datasets described in Table~\ref{tab:benchmark}. Results within 1\% of the best have been boldfaced.}
\label{tab:res}
\begin{tabularx}{1\textwidth} {lYYYYY}
\hline
\textbf{Descriptor} & \textbf{DD} & \textbf{COLLAB} & \textbf{RDT-2} & \textbf{RDT-5} & \textbf{RDT-12} \\ \hline
NetLSD~\cite{tsitsulin2018netlsd} & \textbf{70.36\%} & \textbf{74.27\%} & 82.85\% & \textbf{41.23\%} & 30.9\% \\ \hline
\textsc{gabe} ($b=\nicefrac{1}{4}|E_G|$) & 65.23\% & 63.62\% & 84.65\% & \textbf{41.1\%} & 32.18\% \\
\textsc{gabe} ($b=\nicefrac{1}{2}|E_G|$) & 69.08\% & 65.23\% & \textbf{85.35}\% & 40.63\% & \textbf{32.96\%} \\ \hline
\textsc{maeve} ($b=\nicefrac{1}{4}|E_G|$) & 59.44\% & 68.42\% & 85.04\% & 41.15\% & 32.57\% \\
\textsc{maeve} ($b=\nicefrac{1}{2}|E_G|$) & 61.26\% & 70.95\% & \textbf{86.15}\% & \textbf{41.53\%} & \textbf{33.69\%} \\ \hline
\end{tabularx}
\end{table}
\section{Experimental Evaluation}\label{sec:experiments}
In this section, we perform experiments to show how the approximation quality changes with respect to $b$, explore how the descriptors perform on  classification tasks, and showcase the scalability of the algorithms. 
As in ~\cite{berlingerio2013network}, from extensive experiments, we found that Canberra distance $\left(d(\Vec{x},\Vec{y}) := \sum_{i = 1}^{d} \frac{|\Vec{x}_i - \Vec{y}_i|}{|\Vec{x}_i|+|\Vec{y}_i|}\right)$ performs best when comparing the descriptors. 
We refer to the approximation error as the distance between the true vectors and their approximations.

\vskip.05in
\noindent\textbf{Implementation.} All experiments were performed on a machine with 48 Intel Xeon E5-2680 v3 \@ 2.50GHz Processors, and 125 GB RAM. The algorithms are implemented\footnote{Code: \url{https://github.com/zohair-raza/estimating-graph-descriptors/}} in C++ using MPICH 3.2 on the base code provided by the authors of Tri-Fly~\cite{shin2018tri}. We use 25 processes to simulate 1 Master and 24 workers. Each descriptor is computed exactly once under this setting.
\vskip.05in
\noindent\textbf{Datasets.} We evaluate our models on randomly sampled REDDIT graphs\footnote{\url{https://dynamics.cs.washington.edu/data.html}}, five benchmark classification datasets with large graphs: D\&D, COLLAB, REDDIT-BINARY, REDDIT-MULTI-5K, and REDDIT-MULTI-12K~\cite{yanardag2015deep} (Table~\ref{tab:benchmark}), and massive networks from KONECT~\cite{konect} (Table~\ref{tab:massive}). For each graph, we remove duplicated edges and self-loops, convert to edge-list format with vertex labels in the range $[0, |V_G|-1]$, and randomly shuffle the list.
\subsection{Approximation Quality}
We uniformly sampled 1000 graphs of size 10,000 to 50,000 from REDDIT, representing interactions in various ``sub-reddits''. In Figure~\ref{fig:plots}(a) we show how the average approximation error taken over all the sampled graphs decreases as $b$ (a fraction of the number of edges) increases.



\begin{figure}[h!]
    \centering
    \subcaptionbox{Error vs. $b$}{%
    \resizebox {0.29\textwidth} {!} {%
    \centering
    \begin{tikzpicture}
    \begin{axis}[
        xlabel={Budget [\% of $|E_G|$]},
        ylabel={Avg. Distance [\num{1e-1}]},
        xtick={0,10,20,30,40,50},
        ytick={0,15,30},
        legend pos=north east,
        ymajorgrids=true,
        grid style=dashed,
        label style={font=\Large},
    ]
    \addplot[
        color=blue,
        mark=square,
        ]
        coordinates {
        (5,15.15)(10,7.09)(15,4.07)(20,2.79)(25,2.01)(30,1.5)(35,1.15)(40,1.01)(45,0.77)(50,0.63)
        };
        \addlegendentry{\textsc{gabe}}
    \addplot[
        color=orange,
        mark=o,
        ]
        coordinates {
        (5,31.73)(10,21.98)(15,16.6)(20,12.61)(25,9.92)(30,7.67)(35,5.94)(40,4.64)(45,3.58)(50,2.72)
        };
        \addlegendentry{\textsc{maeve}}
    \end{axis}
    \end{tikzpicture}
    }}%
    \hfill
    \subcaptionbox{$b=100,000$}{%
    \resizebox {0.30\textwidth} {!} {%
    \centering
    \pgfkeys{
    /prepare label/.style={
        /print label/\detokenize{#1}/.code={\ttfamily\detokenize{#1}}
    },
    /prepare label/.list={PT,FL,US,UD}
    }
    \begin{tikzpicture}
    \begin{axis}[
        xlabel={Wall-clock time [minutes]},
        ylabel={Distance},
        legend pos=north east,
        ymajorgrids=true,
        label style={font=\Large},
        grid style=dashed
    ]
        \addplot[
            mark=square,
            color=blue,
            only marks,
            point meta=explicit symbolic,
            nodes near coords={
                \pgfkeys{/print label/\pgfplotspointmeta/.try}
            }
        ]
            table[header=false,meta index=0,x index=1,y index=2, row sep=crcr]{%
                PT	0.5203333333	3.36\\
                    FL	5.48	2.77\\
                    US	0.63	5.24\\
                    UD	10.58	6.48\\
            };
        \addlegendentry{\textsc{gabe}}
        \addplot[
            mark=o,
            color=orange,
            only marks,
            point meta=explicit symbolic,
            nodes near coords={
                \pgfkeys{/print label/\pgfplotspointmeta/.try}
            }
        ]
            table[header=false,meta index=0,x index=1,y index=2, row sep=crcr]{%
                PT	0.77	5.11\\
                FL	3.48	5.09\\
                US	1.21	11.39\\
                UD	9.05	9.61\\
            };
        \addlegendentry{\textsc{maeve}}
    \end{axis}
    \end{tikzpicture}
    }}%
    \hfill
    \subcaptionbox{$b=500,000$}{%
    \resizebox {0.30\textwidth} {!} {%
    \centering
    \pgfkeys{
    /prepare label/.style={
        /print label/\detokenize{#1}/.code={\ttfamily\detokenize{#1}}
    },
    /prepare label/.list={PT,FL,US,UD}
    }
    \begin{tikzpicture}
    \begin{axis}[
        xlabel={Wall-clock time [minutes]},
        ylabel={Distance},
        legend pos=north east,
        ymajorgrids=true,
        label style={font=\Large},
        grid style=dashed
    ]
        \addplot[
            mark=square,
            color=blue,
            only marks,
            point meta=explicit symbolic,
            nodes near coords={
                \pgfkeys{/print label/\pgfplotspointmeta/.try}
            }
        ]
            table[header=false,meta index=0,x index=1,y index=2, row sep=crcr]{%
                PT	0.84	2.65\\
                FL	101.37	2.66\\
                US	0.90	4.84\\
                UD	17.23	3.56\\
            };
        \addlegendentry{\textsc{gabe}}
        \addplot[
            mark=o,
            color=orange,
            only marks,
            point meta=explicit symbolic,
            nodes near coords={
                \pgfkeys{/print label/\pgfplotspointmeta/.try}
            }
        ]
            table[header=false,meta index=0,x index=1,y index=2, row sep=crcr]{%
                PT	1.13	3.14\\
                FL	12.62	3.95\\
                US	1.34	10.08\\
                UD	19.18	7.73\\
            };
        \addlegendentry{\textsc{maeve}}
    \end{axis}
    \end{tikzpicture}
    }}%
    \caption{\small Approximation error and runtime of \textsc{gabe} and \textsc{maeve} (best viewed in color).}
    \label{fig:plots}
\end{figure}

\subsection{Graph Classification}
We computed descriptors for graphs in Table~\ref{tab:benchmark} from samples of 25\% and 50\% of all the edges and examined their classification accuracy. We used the state-of-the-art descriptor, NetLSD~\cite{tsitsulin2018netlsd}, as a benchmark, despite the fact that our models have {no direct competitors}. As in~\cite{tsitsulin2018netlsd}, we used a simple 1-Nearest Neighbour classifier. We performed 10-fold cross-validation for 10 different random splits of the dataset (i.e. 100 different folds are tested on), and report the average accuracy in Table~\ref{tab:res}. Note that despite using only a fraction of edges, \textsc{gabe} and \textsc{maeve} give results competitive to the state of the art.
\subsection{Scaling to Large Real-world Networks} 
We run our algorithms on massive networks (Table~\ref{tab:massive}) and estimated descriptors by setting $b$ to $100,000$ and $500,000$. In Figures~\ref{fig:plots}(b) and (c), we show the scatter plots for wall-clock time taken vs. the distance between the real vectors and their approximations (values nearer to the origin are better). We are able to process a graph with $\approx 260$ million edges under 20 minutes, with relatively low error. Note that when $b = 500,000$, \textsc{gabe} takes 102 minutes to compute the descriptor for Flickr, implying that we must take the density of the graph into account for efficient computation when setting the value of $b$.
\section{Conclusion}
\label{sec:conclusion}
In this work, we present single-pass streaming algorithms to construct graph descriptors using a fixed amount of memory. We show that these descriptors provide better approximations with increasing $b$, are comparable with the state-of-the-art known descriptors in terms of classification accuracy, and scale well to networks with millions of vertices and edges. 
%
%
\bibliographystyle{splncs04}
\bibliography{main.bib}
\end{document}